\newtheorem{theorem}{Theorem}
\newtheorem{fact}{Fact}
\newtheorem{conjecture}{Conjecture}
\newtheorem{observation}{Observation}
\DeclareMathOperator{\Exp}{Exp}
\DeclareMathOperator{\supp}{supp}
\DeclareMathOperator{\poly}{poly}
\title{\Large Simpler and Stronger Approaches for Non-Uniform Hypergraph Matching and the F{\"{u}}redi, Kahn, and Seymour Conjecture\thanks{This project received funding from the European Research Council (ERC) under the European Union's Horizon 2020 research and innovation programme (grant agreement No 817750) and from the Swiss National Science Foundation under grant 200021\_184622.
}}
\author{Georg Anegg\thanks{ETH Zurich. Email: ganegg@ethz.ch} \and Haris Angelidakis\thanks{TU Eindhoven. Email: c.angelidakis@tue.nl. Research was conducted while the author was at ETH Zurich.} \and Rico Zenklusen\thanks{ETH Zurich. Email: ricoz@ethz.ch}}
\date{}
\begin{document}

\maketitle

\begin{abstract}

A well-known conjecture of F{\"{u}}redi, Kahn, and Seymour (1993) on non-uniform hypergraph matching states that for any hypergraph with edge weights $w$, there exists a matching $M$ such that the inequality $\sum_{e\in M} g(e) w(e) \geq \mathrm{OPT}_{\mathrm{LP}}$ holds with $g(e)=|e|-1+\sfrac{1}{|e|}$, where $\mathrm{OPT}_{\mathrm{LP}}$ denotes the optimal value of the canonical LP relaxation.
While the conjecture remains open, the strongest result towards it was recently obtained by Brubach, Sankararaman, Srinivasan, and Xu (2020)---building on and strengthening prior work by Bansal, Gupta, Li, Mestre, Nagarajan, and Rudra (2012)---showing that the aforementioned inequality holds with $g(e)=|e|+O(|e|\exp(-|e|))$.
Actually, their method works in a more general sampling setting, where, given a point $x$ of the canonical LP relaxation, the task is to efficiently sample a matching $M$ containing each edge $e$ with probability at least $\sfrac{x(e)}{g(e)}$.

We present simpler and easy-to-analyze procedures leading to improved results.
More precisely, for any solution $x$ to the canonical LP, we introduce a simple algorithm based on exponential clocks for Brubach et al.'s sampling setting achieving $g(e)=|e|-(|e|-1)x(e)$. 
Apart from the slight improvement in $g$, our technique may open up new ways to attack the original conjecture. 
Moreover, we provide a short and arguably elegant analysis showing that a natural greedy approach for the original setting of the conjecture shows the inequality for the same $g(e)=|e|-(|e|-1)x(e)$ even for the more general hypergraph $b$-matching problem.

\end{abstract} 

\section{Introduction}

The maximum matching problem is one of the most heavily studied problems in Combinatorial Optimization. It has a natural and well-known generalization to hypergraphs, where the task is to find a maximum weight subset of non-overlapping (hyper-)edges. Formally, for a hypergraph $H=(V,E)$ with $E\subseteq 2^V$ and edge weights $w\in \mathbb{R}_{\geq 0}^E$, the task is to solve
\begin{equation*}
\max\{w(M)\colon M\subseteq E \text{ with } e_1\cap e_2=\emptyset \;\;\forall e_1,e_2\in M, e_1\neq e_2 \}\enspace,
\end{equation*}
where $w(M)\coloneqq \sum_{e\in M}w(e)$. (We use the shorthand  $x(U)\coloneqq \sum_{e\in U}x(e)$ for any $x\in\mathbb{R}_{\geq 0}^E$ and $U\subseteq E$.) Depending on the context, the above problem is called \emph{hypergraph matching} or \emph{set packing}, where the former is often used when deriving results as a function of the sizes of hyperedges.

Interestingly, despite intensive work there are still large gaps in our understanding of the approximability of hypergraph matching. In general hypergraphs the problem is equivalent to the maximum independent set problem, which means that it is $n^{1 - \varepsilon}$-hard to approximate, as shown by H{\aa}stad~\cite{hastad1999}. Nevertheless, significant progress has been achieved for $k$-uniform hypergraphs, i.e., when all edges have size $k$. (Note that the case $k=2$ corresponds to the classical matching problem.) In this setting, approximation guarantees are described as a function of $k$, and strong results have been obtained both with local search techniques~\cite{DBLP:journals/njc/Berman00,DBLP:conf/focs/Cygan13,DBLP:conf/iscopt/FurerY14} and approaches based on linear programming~\cite{CL12,FKS93,PP16} (see Section~\ref{sec:priorwork} below for additional information).

When dealing with non-uniform instances, which is the focus of this work, it is harder to identify a good fine-grained way to express approximation guarantees, as there is no obvious global parameter like $k$ in the case of $k$-uniform hypergraphs.\footnote{A simple way to parameterize the instance would be by $\max_{e\in E}|e|$, but this is very coarse as it does not allow for exploiting the existence of small hyperedges. Moreover, this parameterization essentially falls back to the $k$-uniform case as results for $k$-uniform hypergraph matching typically extend to hypergraphs with edges of size at most $k$.} Nevertheless, one would expect the problem to be easier to approximate if there are mostly hyperedges of small size. An approach to this problem was suggested almost three decades ago by F\"uredi, Kahn, and Seymour~\cite{FKS93}, who consider edge-wise guarantees with respect to a solution $x$ to the canonical LP relaxation, which is also called the \emph{fractional matching LP} and is defined as follows:
\begin{equation}
\renewcommand{\arraystretch}{1.3}
\begin{array}{c>{\displaystyle}r@{\ }c@{\ }ll}
\max          &\sum_{e \in E} w(e) x(e)\\
\textrm{s.t.} & x(\delta(v)) &\leq &1 & \forall v \in V\\[0.2em]
& x &\in & \mathbb{R}^E_{\geq 0}\enspace, & 
\end{array}\tag{\text{fractional matching LP}}
\end{equation}
where $\delta(v)\coloneqq\{e\in E\colon v\in e\}$ denotes all hyperedges containing $v$.
More precisely, F\"uredi et al.~\cite{FKS93} address the question of finding the ``smallest possible" function $g: E \to \mathbb{R}_{\geq 0}$ so that for every edge-weighted hypergraph $H=(V,E)$, there exists a matching $M$ satisfying
\begin{equation}\label{integralityGap}
\sum_{e\in M} g(e) w(e) \geq \sum_{e \in E} w(e) x(e)\enspace ,
\end{equation}
where $x \in [0,1]^E$ is an optimal solution to the fractional matching LP.
This leads to the following well-known conjecture, claiming that $g(e) = |e|-1+\frac{1}{|e|}$ is achievable.

\begin{conjecture}[F{\"{u}}redi et al.~\cite{FKS93}]\label{conj:fks}
For any hypergraph $H=(V,E)$ and edge weights $w\in \mathbb{R}^E_{\geq 0}$, there is a matching $M$ such that
\begin{equation*}
    \sum_{e \in M} \left( |e| - 1 + \frac{1}{|e|}  \right) w(e) \geq \sum_{e \in E} w(e)x(e) \enspace ,
\end{equation*}
where $x$ is an optimal solution to the frac\-tion\-al match\-ing LP.
\end{conjecture}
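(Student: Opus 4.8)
\medskip
\noindent\textbf{Proof approach.}\
The plan is to attack Conjecture~\ref{conj:fks} via its (stronger, distributional) sampling formulation. Fix an optimal solution $x$ of the fractional matching LP. It suffices to construct a distribution $\mathcal{D}$ over matchings of $H$ such that
\[
\Pr_{M\sim\mathcal{D}}[e\in M]\;\geq\;\frac{x(e)}{g(e)}\qquad\text{with }g(e)=|e|-1+\tfrac{1}{|e|}\enspace,
\]
for then linearity of expectation gives $\mathbb{E}_{M\sim\mathcal{D}}\bigl[\sum_{e\in M}g(e)w(e)\bigr]=\sum_{e}g(e)w(e)\Pr[e\in M]\geq\sum_{e}w(e)x(e)$, so some matching in the support of $\mathcal{D}$ witnesses the claimed inequality for every weight vector $w$ at once. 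The target form of $g$ is forced and should guide the construction: on a projective plane of order $q$ (lines of size $q+1$), the uniform point $x\equiv\sfrac{1}{q+1}$ is LP-feasible and optimal under unit weights, every matching consists of at most one edge, and matching the LP value $q+\sfrac{1}{q+1}$ requires exactly $g(e)=|e|-1+\sfrac{1}{|e|}$. Hence any $\mathcal{D}$ we build must be asymptotically optimal on such near-tight instances, which constrains the analysis considerably.

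For the construction itself I would use a competing-clocks greedy process: independently draw an activation time $T_e\sim\Exp(\lambda_e)$ for each edge $e$, with the rate $\lambda_e$ a carefully chosen function of $x(e)$, then scan edges in increasing order of $T_e$ and add $e$ to $M$ whenever $e$ does not intersect an already-added edge. (This is the same family of processes underlying the weaker bounds $g(e)=|e|+O(|e|\exp(-|e|))$ and $g(e)=|e|-(|e|-1)x(e)$ mentioned above; the entire game is how sharply one can analyze it.) The key quantity is $\Pr[e\in M]$: conditioning on $T_e=t$, the edge $e$ is selected precisely when, for every vertex $v\in e$, none of the edges $f\in\delta(v)\setminus\{e\}$ that activated before time $t$ was itself selected. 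The per-vertex LP constraints $\sum_{f\in\delta(v)}x(f)\leq 1$ cap the ``blocking mass'' that $e$ sees at each of its $|e|$ vertices, and integrating this over $t$ is what should produce a bound of the form $\Pr[e\in M]\geq x(e)/g(e)$.

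The main obstacle — and the reason Conjecture~\ref{conj:fks} is still open — lies exactly in the last sentence. A straightforward analysis effectively union-bounds over the $|e|$ vertices of $e$, treating each as contributing a full unit of independent blocking, and only yields $g(e)\approx|e|$; a more careful competing-clocks argument gives $g(e)=|e|-(|e|-1)x(e)$, which already meets the conjecture at $x(e)=\sfrac{1}{|e|}$ (the projective-plane regime) but is too weak whenever $x(e)$ is bounded away from $\sfrac{1}{|e|}$. Closing the gap to $|e|-1+\sfrac{1}{|e|}$ requires exploiting the \emph{correlations} among the blocking events at the different vertices of $e$: they cannot all be simultaneously worst-case, and configurations that are nearly tight are forced to look locally like (truncated) projective planes. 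I would therefore expect that one needs either (i) a delicate, union-bound-free analysis of the competing-clocks process that tracks the joint law of the first selected blocker across the vertices of $e$, or (ii) a genuinely different, possibly non-constructive route via LP duality and fractional edge covers (in the spirit of Füredi's fractional results) that bounds the ratio between the fractional matching polytope and the convex hull of matchings without passing through a single explicit sampling scheme. Absent such an idea, the probabilistic plan above recovers the state-of-the-art guarantee $g(e)=|e|-(|e|-1)x(e)$ but not the conjectured $|e|-1+\sfrac{1}{|e|}$.
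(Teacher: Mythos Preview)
The statement you were asked to prove is Conjecture~\ref{conj:fks}, which the paper explicitly leaves \emph{open}; there is no proof in the paper to compare against. What the paper does prove are the weaker Theorems~\ref{mainTheorem} and~\ref{thm:greedy}, both achieving $g(e)=|e|-(|e|-1)x(e)$, and in its conclusion the paper discusses exactly the obstacle you identify: the exponential-clocks bound matches the conjectured $|e|-1+\sfrac{1}{|e|}$ only when $x(e)\geq\sfrac{1}{|e|}$, and closing the gap for small $x(e)$ would require exploiting extreme-point structure or correlations that the current analysis ignores.

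Your write-up is therefore not a proof but an accurate and honest status report, and the approach you sketch---competing exponential clocks, union-bounding the blocking mass over the $|e|$ vertices of $e$, integrating over the activation time---is precisely the paper's Algorithm~\ref{alg:exp-clocks} and its analysis. You correctly land on $g(e)=|e|-(|e|-1)x(e)$ and correctly flag the union bound over vertices as the lossy step. The two speculative directions you offer (a correlation-aware analysis of the clocks process, or a non-constructive duality argument) are reasonable but remain just that; the paper's own discussion in Section~\ref{sec:discussion} adds one further angle you might note, namely that for an \emph{extreme point} $x$ of the fractional matching polytope the average of $|e|x(e)$ over $\supp(x)$ is at least $1$, so heavy edges with large $x$-value might compensate for the loss on edges with $x(e)<\sfrac{1}{|e|}$. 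In short: no gap in your reasoning, but also no proof---which is the correct state of affairs.
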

Even though the conjecture remains open, par\-tial progress has been achieved in several directions. In particular, F\"uredi et al.~\cite{FKS93} showed non-constructively that it holds if $H$ is either an
unweighted, uniform, or intersecting hypergraph.
The conjecture, if true, is best possible in the sense that it is known to be tight even for unweighted $k$-uniform hypergraphs for infinitely many values of $k$, namely whenever $k$ is one unit more than a prime power~\cite{FKS93}.

\medskip

A step towards the general form of Conjecture~\ref{conj:fks} was made by Bansal, Gupta, Li, Mestre, Nagarajan, and Rudra~\cite{BGLMNR12} in the context of the more general stochastic set packing problem. Although it is not stated explicitly in~\cite{BGLMNR12}, their algorithm shows that~(\ref{integralityGap}) holds for $g(e) = |e| + 1 + o(|e|)$. (A formal proof of this is given in~\cite{BBSX20}.) More precisely, they present a randomized procedure returning a matching that contains each edge $e$ with probability at least $\frac{x(e)}{|e| + 1 + o(|e|)}$, where $x$ is a fractional matching. Motivated by this, Brubach, Sankararaman, Srinivasan, and Xu~\cite{BBSX20} presented a strengthening of the algorithm in~\cite{BGLMNR12} to sample a matching containing each edge $e$ with probability at least $\frac{x(e)}{|e| + O(|e| \exp(-|e|))}$, which implies~(\ref{integralityGap}) for $g(e) = |e| + o(|e|)$. Inspired by these results, Brubach et al.~\cite{BBSX20} formulated the following slightly stronger conjecture.

\begin{conjecture}[Brubach et al.~\cite{BBSX20}] \label{conj:sampling}
For any hypergraph $H=(V,E)$, edge weights $w\in \mathbb{R}_{\geq 0}^E$, and fractional matching $x\in [0, 1]^E$, it is possible to efficiently sample a matching $M\subseteq E$ from a distribution such that each edge $e\in E$ is contained in $M$ with probability at least $\frac{x(e)}{|e|-1+\sfrac{1}{|e|}}$. 
\end{conjecture}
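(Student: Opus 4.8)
\medskip

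\noindent\emph{A proposed line of attack.} Since Conjecture~\ref{conj:sampling} is open, the following is a plan rather than a proof. First, the sampling requirement can be stripped away: the vectors $(\Pr_{M}[e\in M])_{e\in E}$ realizable by distributions over matchings form exactly the matching polytope $P_M=\mathrm{conv}\{\mathbf 1_M : M\subseteq E\text{ a matching}\}$, so the desired distribution exists if and only if $\bigl(\sfrac{x(e)}{g(e)}\bigr)_{e\in E}$, with $g(e)=|e|-1+\sfrac{1}{|e|}$, is coordinatewise dominated by some point of $P_M$; by LP duality this is equivalent to: for every $w\in\mathbb R_{\ge 0}^E$ there is a matching $M$ with $w(M)\ge\sum_{e\in E}\sfrac{w(e)x(e)}{g(e)}$. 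Applying Conjecture~\ref{conj:fks} with the reweighting $\tilde w(e)=w(e)/g(e)$, and using only that $x$ is LP-feasible, yields exactly this inequality, since $\sum_{e\in M}g(e)\tilde w(e)=w(M)$. Hence the plan splits into (a) proving the FKS-type bound $\sum_{e\in M}g(e)w(e)\ge\sum_{e\in E}w(e)x(e)$ with $g(e)=|e|-1+\sfrac{1}{|e|}$, and (b) making the witnessing procedure polynomial-time; moreover the target has to be beaten precisely where $x(e)<\sfrac{1}{|e|}$, since for $x(e)\ge\sfrac{1}{|e|}$ the bound $g(e)=|e|-(|e|-1)x(e)$ of this paper already suffices.

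\noindent\emph{Reducing to extremal instances.} For (a) I would first reduce to ``tight'' instances: after discarding edges with $x(e)=0$, one may take $x$ to be a vertex of $\{x\ge 0 : x(\delta(v))\le 1\ \forall v\}$, and by splitting vertices or attaching dummy pendant edges — neither of which decreases $\Pr[f\in M]$ for a surviving edge $f$ — one may assume $x(\delta(v))=1$ at every vertex of the edge $e$ currently under scrutiny. The local structure around such an $e$ of size $k$ is then governed by the single counting bound $\sum_{f\cap e\ne\emptyset,\,f\ne e}x(f)\,|f\cap e|\le k-kx(e)$. The worst configuration of blockers should be the genuinely uniform projective-plane instance ($k$-uniform, $k^2-k+1$ vertices and edges, $x\equiv\sfrac{1}{k}$), where the conjecture is not merely tight but rigid: any two edges meet, so $|M|\le 1$, the unique feasible distribution is uniform over edges, and $\Pr[e\in M]=\sfrac{1}{k^2-k+1}=\sfrac{x(e)}{|e|-1+\sfrac{1}{|e|}}$ exactly. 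The aim is then to show, by a convexity argument in the blocker $x$-values subject to the counting bound, that every other configuration is strictly easier.

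\noindent\emph{A refined greedy/clocks analysis.} The procedure would be the competing-clocks / random-priority greedy of this paper — an independent clock of rate $x(f)$ on each edge $f$, add $f$ to $M$ upon ringing iff it meets no already-added edge — for which
\begin{equation*}
\Pr[e\in M]=\int_0^\infty x(e)\,e^{-x(e)t}\,\Pr\bigl[e\text{ meets no edge added before time }t\bigr]\,dt .
\end{equation*}
The estimate giving $g(e)=|e|-(|e|-1)x(e)$ replaces the bracketed event by ``$e$ rings before all of its blockers'', discarding the fact that a blocker $f$ may ring early and still fail to enter $M$ because one of its own blockers — possibly $e$ itself — fired first. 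The plan is to retain this term: set $q_f(t)=\Pr[f\in M\mid e\text{ has not rung by time }t]$, express $\Pr[f\text{ added before }t]$ through the $q_g(\cdot)$ of the blockers $g$ of $f$, and obtain a monotone system of inequalities whose least fixed point one shows forces $\Pr[e\in M]\ge\sfrac{x(e)}{|e|-1+\sfrac{1}{|e|}}$, with equality on the projective plane; combined with the extremal reduction above this gives (a), and (b) is immediate since the process is trivially simulable in polynomial time.

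\noindent\emph{Main obstacle.} The hard part is the analysis in the previous paragraph. The fixed-point system is genuinely cyclic — the blockers of $f$ include $e$ — and the events ``$f$ killed by its own blocker'' are positively correlated with ``$e$ survives'', so any naive independence bound on $q_f(t)$ loses exactly the slack one is trying to capture. Taming these correlations tightly enough to reach $|e|-1+\sfrac{1}{|e|}$, rather than the $|e|+O(|e|\exp(-|e|))$ delivered by the correlation analyses of Bansal et al.\ and Brubach et al., is precisely the content of the conjecture and where the real difficulty lies; should competing clocks turn out to be suboptimal for small $x(e)$, the natural fallback is to blend it with a size-dependent resampling step and rerun the same fixed-point analysis.
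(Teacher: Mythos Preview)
The paper does not prove Conjecture~\ref{conj:sampling}; it is stated as an open problem, and the paper's positive contribution is the weaker bound $g(e)=|e|-(|e|-1)x(e)$ of Theorem~\ref{mainTheorem}. You correctly recognize this and present a research plan rather than a proof, so there is no ``paper's proof'' to compare against.

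Your first step --- reducing the sampling statement to the assertion that for every $w\ge 0$ some matching $M$ satisfies $w(M)\ge \sum_e w(e)x(e)/g(e)$ --- is exactly the Carr--Vempala duality argument the paper carries out in Appendix~\ref{sec:appendix} (Theorem~\ref{thm:general_carr-vempala} and its application). One caveat: Conjecture~\ref{conj:fks} as stated concerns only an \emph{optimal} LP solution $x$, whereas your reweighting step feeds in an arbitrary feasible $x$; you therefore need the ``for all feasible $x$'' strengthening, which is what the paper's Theorem~\ref{thm:greedy} supplies for the weaker $g$, and what the appendix assumes when linking the two conjectures constructively.

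The substantive part of your plan --- the extremal reduction and the fixed-point refinement of the clocks analysis --- is openly speculative, and you say so. Two places where the sketch is thinner than it looks: (i) the ``split vertices / attach pendant edges'' reduction is asserted but not argued; adding pendant edges injects new competitors into the clock race, and it is not evident why $\Pr[f\in M]$ cannot drop for a surviving edge $f$; (ii) the fixed-point system in the $q_f(t)$ is not shown to be monotone, and singling out the projective plane as the unique minimizer via ``convexity in the blocker $x$-values'' would require an argument you do not supply. As you yourself concede in the final paragraph, controlling the positive correlations among blocker-survival events tightly enough to hit $|e|-1+\sfrac{1}{|e|}$ is exactly where all known analyses stall, so the plan names the obstacle without removing it.
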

Such sampling settings are motivated by fairness considerations and have recently attracted significant attention in other contexts, such as in clustering (see, e.g.,~\cite{DBLP:journals/jmlr/HarrisLPST19,HPST19}).

Conjecture~\ref{conj:sampling} clearly implies Conjecture~\ref{conj:fks}. Moreover, it was observed in~\cite{BBSX20} that Conjecture~\ref{conj:fks} implies a non-constructive version of Conjecture~\ref{conj:sampling}, i.e., that there exists a distribution over matchings containing each edge $e$ with probability at least $\frac{x(e)}{|e|-1+\sfrac{1}{|e|}}$. We highlight that by adjusting a well-known technique based on both duality and the ellipsoid method by Carr and Vempala~\cite{DBLP:journals/rsa/CarrV02}, one can derive that a constructive version of Conjecture~\ref{conj:fks}, where the matching $M$ has to be found efficiently, also implies Conjecture~\ref{conj:sampling}. (For completeness, we show this in Appendix~\ref{sec:appendix}.) Hence, from a theoretical perspective, a key difference between Conjecture~\ref{conj:fks} and Conjecture~\ref{conj:sampling} is the constructiveness. Nevertheless, procedures working directly in the sampling setting remain of interest, as they are typically significantly faster than procedures obtained by using the ellipsoid method through the duality argument.

\subsection{Our contribution}

One main contribution of this work is to improve on Brubach et al.'s~\cite{BBSX20} sampling result, through a significantly simpler procedure that allows for a very short and clean analysis, and leads to an improved factor as highlighted below. The algorithm and its analysis are presented in Section~\ref{sec:exp-clocks}.
\begin{theorem}\label{mainTheorem}
Let $H=(V,E)$ be a hypergraph, $w\in \mathbb{R}^E_{\geq 0}$, and let $x\in [0,1]^E$ be a fractional matching. Then, there is an efficient sampling procedure returning a matching $M$ that satisfies
\begin{equation*}
\Pr[e\in M] \geq \frac{x(e)}{|e|-(|e|-1)x(e)} \quad\forall e\in E\enspace.
\end{equation*}
\end{theorem}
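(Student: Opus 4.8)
The plan is to use an exponential-clock scheme on the vertices. For each vertex $v \in V$, I would independently draw a clock $T_v \sim \mathrm{Exp}(\lambda_v)$ for a suitable rate $\lambda_v$ to be chosen later; a natural first guess is $\lambda_v = x(\delta(v))$ or simply $\lambda_v = 1$. Then, for each edge $e$, I would declare $e$ ``selected'' if all of its vertices' clocks are sufficiently small — e.g., if $\max_{v \in e} T_v$ is below some edge-dependent threshold, or, in the spirit of the classical competition scheme, if $e$ ``wins'' the race among all edges at each of its vertices. Since overlapping edges may still be selected simultaneously, one then needs a conflict-resolution step: take $M$ to be the set of selected edges, or a maximal independent subset thereof obtained by some canonical rule (e.g., order edges by the time at which they become selected). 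The key quantity to control is $\Pr[e \in M]$, which factors (heuristically) as $\Pr[e \text{ selected}] \cdot \Pr[e \text{ survives conflicts} \mid e \text{ selected}]$.

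The main technical step is to compute, or lower-bound, these two probabilities in closed form using the memorylessness of the exponential distribution. The standard trick here is that for independent exponentials, $\Pr[T_v = \min_{u} T_u]$ is a clean ratio of rates, and conditioning on ``$v$ fires first'' leaves the other clocks fresh. Concretely, if $e = \{v_1,\dots,v_k\}$ with $k = |e|$, I expect the selection probability of $e$ to involve a product/integral that evaluates to something like $\prod_{i} \frac{x(e)}{x(\delta(v_i))}$ times a combinatorial factor, and the survival probability to be bounded below by integrating out the competing edges' clocks. The target bound $\Pr[e \in M] \geq \frac{x(e)}{|e| - (|e|-1)x(e)}$ suggests that after all the dust settles, the relevant bound is of the form $\frac{x(e)}{\text{(something that is a convex combination of $1$ and $|e|$ weighted by $x(e)$)}}$; note $|e| - (|e|-1)x(e) = x(e)\cdot 1 + (1-x(e))\cdot|e|$, which is exactly a convex combination of $1$ and $|e|$. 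This strongly hints that the analysis should split according to whether the ``bottleneck'' at $e$ is $e$ itself (contributing the $1$) or the other $|e|-1$-ish competing edges at its vertices (contributing the $|e|$), and that a single integral identity of the form $\int_0^\infty \big(\text{stuff}\big)\,dt$ will produce precisely this convex combination.

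To carry this out I would: (i) fix the clock distribution and the precise selection/tie-breaking rule, making sure $M$ is always a matching by construction; (ii) write $\Pr[e \in M]$ as an integral over the value $t$ of the smallest clock touching $e$ (or over the clock of a distinguished vertex of $e$), conditioning on that event; (iii) use independence and memorylessness to express the conditional probability that no conflicting edge has already been/will be selected as a product over the vertices $v \in e$ of terms like $\exp(-(x(\delta(v)) - x(e))\,t)$ or a rational function thereof, exploiting $x(\delta(v)) \leq 1$; (iv) evaluate the resulting one-dimensional integral and simplify using $\sum_{v\in e} x(\delta(v)) \le |e|$ together with $x(\delta(v)) \ge x(e)$ for $v \in e$; and (v) check the bound degrades monotonically so that the worst case is exactly the claimed $\frac{x(e)}{|e|-(|e|-1)x(e)}$.

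The hard part will be step (iii)–(iv): getting the conditional ``survives all conflicts'' probability into a form whose integral against the exponential density is exactly computable and yields the convex-combination denominator, rather than a messier bound. In particular, one has to be careful that conflicts at different vertices of $e$ are positively or negatively correlated in the right direction so that a product lower bound (via an FKG-type or direct independence argument) is valid, and that the tie-breaking rule is chosen so that ``$e$ is selected but loses to a conflicting edge'' can be charged cleanly. I expect that a judicious choice — such as selecting $e$ exactly when its own (edge-level) exponential clock, run at rate $x(e)$, is the minimum among all edges meeting $e$, run at rates $x(\delta(v))$ appropriately — collapses both probabilities into one integral, which is presumably the ``simple algorithm based on exponential clocks'' the introduction advertises.
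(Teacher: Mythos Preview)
Your primary plan---putting exponential clocks on \emph{vertices} and then extracting a matching via a selection-plus-conflict-resolution pipeline---is the wrong framing and would not lead to a clean proof. The two-stage decomposition $\Pr[e\text{ selected}]\cdot\Pr[e\text{ survives}\mid e\text{ selected}]$ creates exactly the correlation headaches you anticipate in step~(iii): the events ``no competitor at $v_i$ beats $e$'' for different $v_i\in e$ are not independent (competitors may share vertices with each other), and there is no obvious FKG-type inequality pointing in the direction you need. You would also have to verify that whatever tie-breaking rule you invent actually outputs a matching, which is not automatic with vertex clocks.

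The right idea is the one you mention only speculatively in your last sentence: put the clocks on \emph{edges}. Concretely, for each $e\in\supp(x)$ draw $Z_e\sim\Exp(x(e))$ independently, and set
\[
M \;=\; \bigl\{\, e\in\supp(x) \;:\; Z_e < Z_f \text{ for all } f\in N(e)\cap\supp(x)\,\bigr\}\enspace.
\]
This is already a matching by construction (if $e,f$ overlap they cannot both be strict minima in each other's neighborhood), so there is no separate conflict-resolution step and no ``survival'' probability to control. The entire analysis is then a single application of the two standard exponential facts: $\min_{f\in N(e)} Z_f \sim \Exp(x(N(e)))$ and $\Pr[Z_e<\Exp(\lambda)]=\frac{x(e)}{x(e)+\lambda}$, giving
\[
\Pr[e\in M] \;=\; \frac{x(e)}{x(e)+x(N(e))} \;\geq\; \frac{x(e)}{x(e)+|e|(1-x(e))} \;=\; \frac{x(e)}{|e|-(|e|-1)x(e)}\enspace,
\]
where the inequality is just the union bound $x(N(e))\le\sum_{v\in e}(x(\delta(v))-x(e))\le |e|(1-x(e))$. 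No integral, no conditioning, no FKG. Your convex-combination reading of the denominator is correct, but it falls out of this one line rather than requiring a delicate split into cases.
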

Note that while Theorem~\ref{mainTheorem} does not prove Conjecture~\ref{conj:sampling}, it is not implied by it either, i.e., the two statements are incomparable. Nevertheless, it strengthens Brubach et al.'s result and, as we discuss in the conclusion, may open up new ways to attack Conjecture~\ref{conj:fks}. However, the arguably most important point is the very clean and simple underlying algorithm and analysis.

Inspired by the result of Parekh and Pritchard~\cite{PP16} on $k$-uniform hypergraph $b$-matching, we then study Conjectures~\ref{conj:fks} and \ref{conj:sampling} in the more general non-uniform hypergraph $b$-matching setting. We recall that a hypergraph $b$-matching problem consists of a hypergraph $H=(V,E)$, edge weights $w\in \mathbb{R}^E_{\geq 0}$, and vertex capacities $b\in \mathbb{Z}^V_{\geq 1}$, and the task is to find a maximum weight $b$-matching $M$, which is a subset of edges such that, for any $v\in V$, at most $b(v)$ edges in $M$ contain $v$.
Moreover, a fractional $b$-matching is a point in the polytope $\{x \in [0,1]^E: x(\delta (v)) \leq b(v) \;\;\forall v \in V\}$. We show that a natural greedy algorithm, which picks edges in order of decreasing weights, allows for deriving the following result. This part is discussed in Section~\ref{sec:greedy}.
\begin{theorem}\label{thm:greedy}
Given a hypergraph $H = (V,E)$, $w\in \mathbb{R}_{\geq 0}^E$, $b\in \mathbb{Z}_{\geq 1}$, and a fractional $b$-matching $x \in [0,1]^E$, the greedy algorithm returns a $b$-matching $M$ that satisfies
\begin{equation*}
    \sum_{e \in M} \left( |e| - (|e| - 1)x(e)  \right) w(e) \geq \sum_{e \in E} w(e)x(e) \enspace.
\end{equation*}
\end{theorem}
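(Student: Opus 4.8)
The plan is to reduce Theorem~\ref{thm:greedy} to a single ``per-prefix'' inequality via the standard telescoping (local-ratio) trick, and then to prove that inequality by a charging argument in which each edge rejected by the greedy algorithm is blamed on exactly one of its vertices that has already reached its capacity.

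Order the edges $e_1,\dots,e_m$ so that $w(e_1)\ge\cdots\ge w(e_m)$ --- this is the order in which greedy processes them --- set $w(e_{m+1}):=0$, $\delta_j:=w(e_j)-w(e_{j+1})\ge 0$, and $E_j:=\{e_1,\dots,e_j\}$, so that $w=\sum_{j=1}^m\delta_j\mathbf{1}_{E_j}$. Writing $g(e):=|e|-(|e|-1)x(e)$, this gives $\sum_{e\in E}w(e)x(e)=\sum_{j}\delta_j\,x(E_j)$ and $\sum_{e\in M}g(e)w(e)=\sum_{j}\delta_j\sum_{e\in M\cap E_j}g(e)$, and since every $\delta_j\ge 0$ it suffices to show, for each $j$,
\[
\sum_{e\in M_j}g(e)\ \ge\ x(E_j),\qquad\text{where}\quad M_j:=M\cap E_j.
\]
The useful fact here is that $M_j$ is precisely the partial $b$-matching held by greedy right after it has processed $e_j$ (whether $e_i$ is accepted depends only on the edges accepted before it), and that greedy never removes an edge.

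For the per-prefix inequality I would first rewrite $g(e)=x(e)+\sum_{v\in e}\bigl(1-x(e)\bigr)$, which gives
\[
\sum_{e\in M_j}g(e)=x(M_j)+\sum_{v\in V}\ \sum_{e\in M_j\cap\delta(v)}\bigl(1-x(e)\bigr),
\]
so, since $x(E_j)=x(M_j)+x(E_j\setminus M_j)$, it remains to prove $\sum_{v\in V}\sum_{e\in M_j\cap\delta(v)}(1-x(e))\ge x(E_j\setminus M_j)$. Each edge $e_i\in E_j\setminus M_j$ was rejected at step $i$ because some vertex $v\in e_i$ already carried $b(v)$ edges of the then-current matching $M_{i-1}\subseteq M_j$; fix one such vertex and call it $\pi(e_i)$. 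As $M_j$ is a $b$-matching and $M_{i-1}\cap\delta(\pi(e_i))\subseteq M_j\cap\delta(\pi(e_i))$, the vertex $\pi(e_i)$ is \emph{saturated} in $M_j$, i.e.\ $|M_j\cap\delta(\pi(e_i))|=b(\pi(e_i))$.

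The crucial --- and only slightly delicate --- step is to group the rejected edges by the value of $\pi$: for a saturated vertex $v$ put $R(v):=\{f\in E_j\setminus M_j:\pi(f)=v\}$, so the sets $R(v)$ form a partition of $E_j\setminus M_j$. Since $R(v)$ and $M_j\cap\delta(v)$ are disjoint subsets of $\delta(v)$, the LP constraint at $v$ and saturation of $v$ give
\[
x(R(v))\ \le\ b(v)-x(M_j\cap\delta(v))\ =\ |M_j\cap\delta(v)|-x(M_j\cap\delta(v))\ =\ \sum_{e\in M_j\cap\delta(v)}\bigl(1-x(e)\bigr).
\]
Summing over all saturated $v$, the left-hand sides add up to $x(E_j\setminus M_j)$ while the right-hand sides are bounded by $\sum_{v\in V}\sum_{e\in M_j\cap\delta(v)}(1-x(e))$ (all terms being nonnegative), which is exactly the desired inequality. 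The main obstacle to be careful about is precisely this last step: if a rejected edge were allowed to draw on more than one saturated vertex, its $x$-mass could be double-counted; committing each rejected edge to a single $\pi(f)$ keeps the count tight --- and it genuinely is tight, for instance on a triangle with unit edge weights and the all-$\tfrac12$ fractional matching. Everything else reduces to routine algebra.
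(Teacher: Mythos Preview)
Your proof is correct, and the core charging argument --- assign each rejected edge to a single saturated vertex $\pi(f)$, then use the LP constraint plus saturation to bound the $x$-mass of rejected edges at $v$ by $\sum_{e\in M\cap\delta(v)}(1-x(e))$ --- is exactly the one in the paper. The difference lies in how the weights are handled. The paper works directly with the weighted sum, exploiting that for a rejected edge $f$ every $e\in\delta(v_f)\cap M$ has $w(e)\ge w(f)$ (since $v_f$ was already saturated when $f$ was considered), and so replaces $w(f)$ by $\min\{w(e):e\in\delta(v_f)\cap M\}$ before carrying out the per-vertex bound. You instead strip off the weights via the telescoping (local-ratio) decomposition $w=\sum_j\delta_j\mathbf{1}_{E_j}$ and prove the purely combinatorial per-prefix inequality $\sum_{e\in M_j}g(e)\ge x(E_j)$. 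These are two standard and essentially equivalent ways to exploit the weight-monotonicity of greedy; your route makes the unweighted core inequality explicit, while the paper's route is one step shorter since no decomposition is needed.
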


Again, we think that a key contribution of this result is the simplicity of the greedy algorithm and the way we analyze it, which leads to a concise proof.

We finally highlight that, by extending a result of Carr and Vempala~\cite{DBLP:journals/rsa/CarrV02} based on LP duality and the ellipsoid method, we can use Theorem~\ref{thm:greedy} to obtain an efficient sampling algorithm returning a $b$-matching that contains each edge $e\in E$ with probability at least $\frac{x(e)}{|e| - (|e| - 1)x(e)}$. Because the employed arguments are quite standard, we formalize this in Appendix~\ref{sec:appendix} as Theorem~\ref{thm:appendix}. Even though this allows our results for $b$-matchings to be used to obtain an alternative proof of Theorem~\ref{mainTheorem}, the resulting sampling algorithm is significantly more involved and slower (due to the use of the ellipsoid method) than our much shorter and elegant approach that we use to prove Theorem~\ref{mainTheorem}.

\subsection{Further discussion on prior work}\label{sec:priorwork}
We expand on the progress for $k$-uniform hypergraph matching that we briefly mentioned above.
This special case remains $\mathtt{APX}$-hard for any $k\geq3$ as shown by Kann~\cite{DBLP:journals/ipl/Kann91}. Nevertheless, local search techniques led to an exciting sequence of strong results, culminating in an approximation factor of $\frac{k+1}{3}+\varepsilon$ for unweighted graphs by Cygan~\cite{DBLP:conf/focs/Cygan13} (see also F\"urer and Yu~\cite{DBLP:conf/iscopt/FurerY14}) and $\frac{k+1}{2}+\varepsilon$ for weighted graphs by Berman~\cite{DBLP:journals/njc/Berman00}. This is contrasted by an approximation hardness of $\Omega(\sfrac{k}{\log k})$ by Hazan, Safra, and Schwartz~\cite{DBLP:journals/cc/HazanSS06}.

Moreover, for $k$-uniform hypergraph matching, the integrality gap of the fractional matching LP is essentially settled. More precisely, F{\"{u}}redi et al.~\cite{FKS93} showed, non-constructively, that the integrality gap is at most $k-1+\frac{1}{k}$, and almost 20~years later, Chan and Lau~\cite{CL12} gave an elegant algorithmic version of that result. We recall that a matching lower bound on the integrality gap was given for all $k$ that are one unit more than a prime power~\cite{FKS93}.
Furthermore, Parekh and Pritchard~\cite{PP16} showed that one can obtain an LP-relative $(k - 1 + \frac{1}{k})$-approximation for the more general $b$-matching problem on $k$-uniform hypergraphs. This is a non-trivial generalization, which may be surprising in view of the fact that many results on matchings easily generalize to $b$-matchings.

\subsection{Brief overview of techniques}
The main tool for obtaining Theorem~\ref{mainTheorem} is the so-called exponential clocks technique. It was previously used by Buchbinder, Naor, and Schwartz~\cite{DBLP:journals/siamcomp/BuchbinderNS18} in the context of the Multiway Cut problem, and a further application in the context of Dynamic Facility Location was found by An, Norouzi-Fard, and Svensson~\cite{DBLP:journals/talg/AnNS17}. At a high level, it refers to competing independent exponential random  variables, where an exponential clock wins a competition if it has the smallest value among all participating exponential clocks. Similar in spirit to the randomized rounding algorithms of Bansal et al.~\cite{BGLMNR12} and Brubach et al.~\cite{BBSX20}, where the first step is to independently sample a random variable for each edge based on its $x$-value, and then extract a matching based on these random variables, the exponential clocks algorithm first independently samples an exponential random variable for each edge and then picks an edge if it is the ``winner" among all edges in its neighborhood (i.e., the realization of its exponential variable is the smallest among its neighborhood). To the best of our knowledge, the first use of the technique on matching problems was by Bruggmann and Zenklusen~\cite{BZ19}, where they used it as a tool to demonstrate the existence of certain distributions over matchings on graphs (not hypergraphs) in the context of designing contention resolution schemes for matchings. (We provide more details on this in Section~\ref{sec:exp-clocks}.) 

In this work, we generalize the arguments of~\cite{BZ19} and obtain the improved result stated in Theorem~\ref{mainTheorem}. Besides being a very elegant and easy-to-analyze algorithm, we hope that it may open up new ways to attack Conjectures~\ref{conj:fks} and~\ref{conj:sampling}, as our analysis already shows that, loosely speaking, when applied to an extreme point of the fractional matching LP, it implies an ``average version" of these conjectures. We expand on this in the conclusion.

Regarding Theorem~\ref{thm:greedy}, we carefully analyze the natural greedy algorithm for the more general hypergraph $b$-matching problem. The analysis is based on a concise charging scheme that charges the LP-contribution of non-matching edges first to tight vertices and then to edges in the $b$-matching to achieve the same factor (function $g$) as the one in Theorem~\ref{mainTheorem}. Thus, it addresses a generalization of Conjecture~\ref{conj:fks} in the $b$-matching setting.

\section{The exponential clocks rounding scheme}\label{sec:exp-clocks}

In this section, we present our algorithm based on exponential clocks that proves Theorem~\ref{mainTheorem}. For $\lambda >0$, we denote by $\Exp(\lambda)$ an exponential random variable with parameter $\lambda$. Our algorithm is described below as Algorithm~\ref{alg:exp-clocks}. Throughout this section, we use the notation $N(e)\coloneqq \{f\in E \colon f \neq e \textrm{ and } f\cap e \neq \emptyset \}$ for the edges overlapping with an edge $e$ and, for a fractional matching $x \in \mathbb{R}_{\geq 0}^E$, we denote its support by $\supp(x)\coloneqq \{e\in E\colon x(e) > 0\}$.
\begin{algorithm2e}
\vspace{0.1em}
Let $Z_e \sim \Exp(x(e))$ for $e\in \supp(x)$ be indep. random variables and $z_e$ a realization of $Z_e$\;

\Return{$M = \{e \in \supp(x): z_e < z_f \;\;\forall f \in N(e) \cap \supp(x)\}$}\;

\caption{Exponential clocks rounding}
\label{alg:exp-clocks}
\end{algorithm2e}

In words, our algorithm realizes for each edge an exponential random variable with parameter equal to its $x$-value, and picks each edge whose exponential random variable realized to a value smaller than any of its overlapping edges.
As already noted, our algorithm is a generalization of a procedure used in~\cite{BZ19} (see proof of Lemma 15 in~\cite{BZ19}) for a different purpose in the context of contention resolution schemes for matchings on graphs.\footnote{More precisely,~\cite{BZ19} uses Algorithm~\ref{alg:exp-clocks} for classical graphs $G=(V,E)$ to show that, for any $y\in \mathbb{R}_{>0}^E$, the vector $\left(\sfrac{y(e)}{(y(e) + \sum_{f \in N(e)} y(f))} \right)_{e \in E}$ is in the matching polytope, by proving that Algorithm~\ref{alg:exp-clocks} returns a matching $M\subseteq E$ with $\Pr[e \in M] = \sfrac{y(e)}{(y(e) + \sum_{f \in N(e)} y(f))}$ for all $e\in E$.}

To show that Algorithm~\ref{alg:exp-clocks} proves Theorem~\ref{mainTheorem}, we recall the following basic facts about exponential random variables, which lie at the heart of procedures based on exponential clocks.
\begin{fact}\label{expmin}
Let $X_i \sim \Exp(\lambda_i)$ for $i\in \{1,\ldots, n\}$ be indep. random variables. Then $\min\{X_1,\ldots, X_n\} \sim \Exp(\lambda_1+\ldots +\lambda_n)$.
\end{fact}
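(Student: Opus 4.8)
The plan is to prove the statement directly through the survival (tail) function, which is the cleanest route for exponential random variables and plays well with independence. Recall that a random variable $X \sim \Exp(\lambda)$ is characterized by $\Pr[X > t] = e^{-\lambda t}$ for all $t \geq 0$ (and $\Pr[X > t] = 1$ for $t < 0$). So it suffices to show that $Y \coloneqq \min\{X_1, \ldots, X_n\}$ satisfies $\Pr[Y > t] = e^{-(\lambda_1 + \cdots + \lambda_n)t}$ for all $t \geq 0$, since the tail function uniquely determines the distribution.

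The key observation is that the event $\{Y > t\}$ is exactly the intersection of the events $\{X_i > t\}$ over all $i \in \{1, \ldots, n\}$: the minimum of the $X_i$ exceeds $t$ if and only if every individual $X_i$ exceeds $t$. Using independence of the $X_i$, I would then write
\[
\Pr[Y > t] = \Pr\left[\bigcap_{i=1}^n \{X_i > t\}\right] = \prod_{i=1}^n \Pr[X_i > t] = \prod_{i=1}^n e^{-\lambda_i t} = e^{-(\lambda_1 + \cdots + \lambda_n)t}\enspace,
\]
valid for all $t \geq 0$. This is precisely the tail function of an $\Exp(\lambda_1 + \cdots + \lambda_n)$ random variable, so $Y \sim \Exp(\lambda_1 + \cdots + \lambda_n)$, as claimed.

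There is essentially no obstacle here; the only point requiring a word of care is making sure the tail-function characterization is invoked correctly (i.e., that equality of survival functions on $[0,\infty)$, together with the trivial behavior on $(-\infty,0)$, indeed pins down the distribution). Since this is a well-known elementary fact and the paper treats it as such, I would keep the proof to the two displayed lines above plus one sentence of justification, and not belabor measure-theoretic formalities.
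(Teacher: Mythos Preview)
Your proof is correct and is the standard argument via the survival function. The paper itself does not prove this statement at all---it is stated as a basic fact about exponential random variables and left without proof---so there is nothing to compare against; your write-up would serve perfectly well as a short justification if one were desired.
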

\begin{fact}\label{expcomp}
Let $X_1 \sim \Exp(\lambda_1), X_2\sim \Exp(\lambda_2)$ be indep. random variables. Then $\Pr[X_1< X_2]=\frac{\lambda_1}{\lambda_1+\lambda_2}$.
\end{fact}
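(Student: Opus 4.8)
The plan is to compute $\Pr[X_1 < X_2]$ by integrating against the density of $X_1$ and using independence. First I would recall the two elementary facts about an $\Exp(\lambda)$ random variable $X$ that I need: its density is $t \mapsto \lambda e^{-\lambda t}$ on $[0,\infty)$, and its survival function is $\Pr[X > t] = e^{-\lambda t}$ for all $t \geq 0$.

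Next, since $X_1$ and $X_2$ are independent, I would condition on the realized value of $X_1$ and integrate:
\begin{equation*}
\Pr[X_1 < X_2]
= \int_0^\infty \Pr[X_2 > t]\cdot \lambda_1 e^{-\lambda_1 t}\, dt
= \int_0^\infty \lambda_1 e^{-(\lambda_1 + \lambda_2) t}\, dt
= \frac{\lambda_1}{\lambda_1 + \lambda_2}\enspace.
\end{equation*}
Equivalently, one can write the double integral $\int_0^\infty \int_t^\infty \lambda_1 e^{-\lambda_1 t}\,\lambda_2 e^{-\lambda_2 s}\, ds\, dt$ of the joint density over the region $\{(t,s) : 0 \le t < s\}$; this is legitimate, and equals the single integral above after performing the inner integration, since the integrand is nonnegative so Tonelli's theorem applies. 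The only point requiring (minimal) care is getting the order and limits of integration right.

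This essentially finishes the proof, and I do not expect any genuine obstacle, as the statement is a standard fact. As a consistency check, repeating the computation with $X_1$ and $X_2$ interchanged gives $\Pr[X_2 < X_1] = \frac{\lambda_2}{\lambda_1+\lambda_2}$, and the two probabilities sum to $1$, matching the fact that $\Pr[X_1 = X_2] = 0$ for independent continuous random variables. I would also mention, without developing it, the alternative route via Fact~\ref{expmin}: since $\min\{X_1,X_2\} \sim \Exp(\lambda_1+\lambda_2)$ and the exponential distribution is memoryless, the event $\{X_1 < X_2\}$ agrees up to a null set with ``the clock $X_1$ rings first'', whose probability is $\frac{\lambda_1}{\lambda_1+\lambda_2}$ by splitting the total rate $\lambda_1 + \lambda_2$ into the two competing rates. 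The direct integral, however, is the shortest and most self-contained argument.
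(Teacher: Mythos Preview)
Your argument is correct and is the standard textbook computation. Note that the paper does not actually prove Fact~\ref{expcomp}; it is stated without proof as a well-known property of exponential random variables, so there is no paper proof to compare against.
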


\begin{proof}[Proof of Theorem~\ref{mainTheorem}]
Algorithm~\ref{alg:exp-clocks} clearly returns a matching $M$, and for each $e\in E$ with $x(e)=0$ we trivially have $\Pr[e\in M]=0=\sfrac{x(e)}{(|e|-(|e|-1)x(e))}$. Hence, it remains to check the probability of $M$ containing an edge $e\in \supp(x)$, which is equal to 
\begin{align*}
\Pr[e\in M] &= \Pr \left[Z_e < \min \{ Z_f \colon f \in N(e) \cap \supp(x)\} \right]\\
& = \Pr \left[Z_e < \Exp\left(x(N(e))\right) \right]\\
& = \frac{x(e)}{x(e) + x(N(e))}\enspace,
\end{align*}
where the first equality holds by construction, the second one holds due to Fact~\ref{expmin}, and the third one holds due to Fact~\ref{expcomp}. The result now follows by observing that
\begin{equation*}
x(N(e)) \leq \sum_{v\in e} \sum_{\substack{ f \in \delta(v):\\ f \neq  e }}x(f) \leq \sum_{v\in e} (1-x(e)) = |e|(1-x(e))\enspace,
\end{equation*}
where the first inequality is a union bound, and the second inequality holds since $x$ is a feasible fractional matching (and thus $x(\delta(v))\leq 1$).
\end{proof}

 \section{The greedy algorithm for non-uniform hypergraph \texorpdfstring{$b$}{b}-matching}\label{sec:greedy}

In this section, we switch to the more general hypergraph $b$-matching problem, and prove Theorem~\ref{thm:greedy} by analyzing a natural greedy algorithm with respect to the fractional hypergraph $b$-matching LP. A description of the greedy procedure is given in Algorithm~\ref{alg:greedy} below.
\begin{algorithm2e}[ht]

Order the edges $E=\{e_1,\ldots, e_m\}$ such that $w(e_1) \geq w(e_2) \geq \ldots \geq w(e_m)$\;

$M=\emptyset$\;

\For{$j=1$ to $m$}{
  \If{$M\cup \{e_j\}$ is a $b$-matching}{
  $M = M\cup\{e_j\}$\;
}
}
\Return{$M$}\;

\caption{Greedy algorithm for hypergraph $b$-matching}
\label{alg:greedy}
\end{algorithm2e}

To analyze Algorithm~\ref{alg:greedy} we present a charging scheme showing that the returned matching $M$ satisfies the properties of Theorem~\ref{thm:greedy}, and thus implies the theorem. For each edge $e\in E\setminus M$, the charging scheme first assigns its LP-weight $w(e)x(e)$ (where $x\in [0,1]^E$ is a fractional $b$-matching) to a vertex $v\in V$ that is \emph{saturated} by $M$, which means that the number of edges in $M$ containing $v$ is equal to $b(v)$. In a second step, the charge of saturated vertices is assigned to edges $f\in M$ in a way that $w(f) (|f|-(|f|-1)x(f))$ is large enough to pay for all charges assigned to it, as well as its own charge $w(f)x(f)$.
\begin{proof}[Proof of Theorem~\ref{thm:greedy}] First, it is clear that
Algorithm~\ref{alg:greedy} always returns a $b$-matching. Let $S = \{v \in V: |\delta(v)\cap M| = b(v)\}$ be the set of all vertices saturated by $M$. Note that the only reason for an edge $f$ not to be added to $M$, i.e., $f \in E \setminus M$, is that $f$ was ``blocked" by a vertex $v_f \in f \cap S$ that was already saturated when $f$ was considered by the algorithm. If $f$ contains more than one such saturated vertex at the moment $f$ is considered by the algorithm, we denote by $v_f$ an arbitrary vertex among those.
Since $v_f$ was already saturated when Algorithm~\ref{alg:greedy} considered $f$ and edges are considered in non-increasing order of weight, every edge in the final set $M$ that contains $v_f$ is at least as heavy as $f$, i.e.,
\begin{equation}\label{eqn:satdomination}
w(f) \leq w(e) \quad \forall e\in \delta(v_f) \cap M\enspace.
\end{equation}

We use this fact to show that the total  LP-contribution of edges in $E\setminus M$ can be bounded by an expression involving the $x$-load on saturated vertices and the weights of edges in $M$ containing them. In particular, we have
\begin{equation}\label{eqn:vertexcharging}
\begin{aligned}
\sum_{f\in E \setminus M} w(f)x(f)
 &= \sum_{v\in S} \sum_{\substack{f\in E\setminus M: \\ v_f = v}} w(f)x(f)\\
&\leq \sum_{v\in S} \min\{w(e)\colon e\in \delta(v) \cap M\} \sum_{\substack{f\in E\setminus M:\\v_f=v}} x(f)\\
&\leq \sum_{v\in S} \min\{w(e)\colon e\in \delta(v) \cap M\} \sum_{f\in \delta(v)\setminus M} x(f)\enspace,
\end{aligned}
\end{equation}
where the (first) equality can be interpreted as assigning, for each $f\in E\setminus M$, the LP-weight $w(f)x(f)$ to the vertex $v_f$ and the first inequality follows from~\eqref{eqn:satdomination}. We now turn to the last term of the right-hand side of the inequality above, the $x$-load of edges at a vertex $v\in S$ that are not in $M$, and we have
\begin{equation}\label{eqn:bmatching}
\sum_{f\in \delta(v)\setminus M} x(f)
 \leq b(v)-\sum_{e\in \delta(v)\cap M} x(e)= \sum_{e \in \delta(v)\cap M} (1 - x(e)) \enspace,
\end{equation}
where the inequality is due to $x$ being a feasible fractional $b$-matching (and thus $x(\delta(v))\leq b(v)$) and the equality is a consequence of $v\in S$ being saturated by $M$, i.e., $|\delta(v)\cap M| = b(v)$.

Finally, by combining~\eqref{eqn:vertexcharging} and~\eqref{eqn:bmatching} we obtain\begin{align*}
\sum_{f\in E \setminus M} w(f)x(f)
&\leq \sum_{v\in S}  \min\{w(e) \colon e\in \delta(v) \cap M\} \sum_{e\in \delta(v)\cap M}(1-x(e))\\
&\leq \sum_{v\in S} \sum_{e\in \delta(v)\cap M} w(e) (1-x(e))\\
& = \sum_{e\in M} |S \cap e|\, w(e)(1-x(e))\\
&\leq \sum_{e\in M} |e|\,  w(e)(1-x(e)) 
\enspace.
\end{align*}

By adding the term $\sum_{e\in M}w(e)x(e)$ to both sides of the above relation, the desired result is obtained.
\end{proof}

 \section{Conclusion}\label{sec:discussion}

The main goal of this work was the development of concise algorithms and analysis techniques for the F\"uredi, Kahn, and Seymour Conjecture and the related sampling problem, i.e., Conjectures~\ref{conj:fks} and \ref{conj:sampling}.
Although our techniques fall short of proving these two conjectures, they lead to the currently best bounds of $g(e)=|e|-(|e|-1) x(e) \leq |e|$ while being simpler than prior approaches.

We now briefly discuss why our factor $g$, and in particular the term $(|e| - 1)x(e)$, may open up new ways to attack both conjectures. For simplicity, let us focus on the original conjecture of F\"uredi, Kahn, and Seymour, i.e., Conjecture~\ref{conj:fks}. More precisely, notice that for any edge $e$ with $x(e)\geq \sfrac{1}{|e|}$, our bound is at least as good as the one conjectured by F\"uredi et al., because in this case $g(e)=|e|-(|e|-1)x(e)\leq |e|-1+\sfrac{1}{|e|}$. Hence, the contribution from these edges corresponds to the factor in the conjecture. Thus, the only issue is edges with small $x$-value. Of course, there are points in the matching polytope with only small $x$-values. However, this is not the case when $x$ is an extreme point of the fractional matching polytope. (Note that an optimal solution to the fractional matching LP, as stipulated by Conjecture~\ref{conj:fks}, can of course always be chosen to be an extreme point.\footnote{Moreover, also in the sampling setting, it suffices to prove Conjecture~\ref{conj:sampling} for extreme points only, as any fractional matching $x$ can first be written as a convex combination $\sum_{i=1}^q \lambda_i x^i$ of extreme points $x^i$ of the fractional matching polytope. If we can get a sampling procedure fulfilling the requirements of Conjecture~\ref{conj:sampling} for extreme points, then we can simply first pick with probability $\lambda_i$ the extreme point $x^i$ and then return a matching using the sampling procedure for $x^i$. One can easily check that such a sampling procedure fulfills the conditions of Conjecture~\ref{conj:sampling} if the sampling procedure for the extreme points $x^i$ does so.})

Indeed, if $x$ is an extreme point of the fractional matching polytope, then the average $x(e)$-value of an edge $e\in \supp(x)$ is at least $\sfrac{1}{|e|}$. This can be derived by using standard sparsity arguments as follows. Let $Q\coloneqq \{v\in V: x(\delta(v)) = 1\}$ be the set of all vertices for which the corresponding constraint in the fractional matching polytope is tight with respect to $x$. The remaining constraints of the fractional matching polytope are non-negativity constraints, which implies by classical sparsity arguments that $|\supp(x)|\leq |Q|$. Moreover, as all vertices in $Q$ are covered by one unit of $x$-value we have $|Q|\leq \sum_{e\in \supp(x)}|e| x(e)$. Hence, $1 \leq \frac{1}{|\supp(x)|}\sum_{e\in \supp(x)} |e| x(e)$, which can be interpreted as follows: ``on average'' the term $|e| x(e)$ is at least $1$, i.e., the ``average load'' of edges $e\in \supp(x)$ is at least $\sfrac{1}{|e|}$.

Of course, this is only an averaging reasoning and many edges $e$ will have $x$-value below $\sfrac{1}{|e|}$. Nevertheless, because $x$ is a maximum weight fractional matching, there is hope that edges with above-average $x$-value tend to be heavy edges, and the improved guarantee we get on those, compared to the guarantee needed by Conjecture~\ref{conj:fks}, may compensate for edges with small $x$-value. Another idea that may be helpful, possibly in combination with the above observation, is to perform a careful alteration of $x$ before applying our rounding procedure (which only requires $x$ to be a feasible, but not necessarily optimal fractional matching) with the goal to obtain stronger guarantees.

Finally, we think it is an interesting open question whether our exponential clocks algorithm can be extended to $b$-matching to obtain an efficient and hopefully simple sampling procedure for $b$-matchings.
 
\section*{Acknowledgments} The authors would like to thank David Harris for suggesting the elegant formulation of Theorem~\ref{thm:general_carr-vempala} in the Appendix.

 \appendix
\section{Relating Conjectures~\ref{conj:fks} and \ref{conj:sampling} via LP duality}\label{sec:appendix}

In this section, we prove the following natural generalization of The\-o\-rem~\ref{mainTheorem} to hypergraph $b$-matching.

\begin{theorem}\label{thm:appendix}
Let $H=(V,E)$ be a hypergraph, $b\in \mathbb{Z}^V_{\geq 1}$, $w\in \mathbb{R}^{E}_{\geq 0}$, and let $x\in [0,1]^E$ be a fractional $b$-matching. Then, there is an efficient sampling procedure returning a $b$-matching $M$ that satisfies
\begin{equation*}
    \Pr[e \in M] \geq \frac{x(e)}{|e|-(|e|-1)x(e)} \quad \forall e\in E\enspace.
\end{equation*}
\end{theorem}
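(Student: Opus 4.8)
Write $g(e)\coloneqq |e|-(|e|-1)x(e)$ for $e\in E$; since $x(e)\in[0,1]$ we have $g(e)\in[1,|e|]$, so in particular $g(e)\ge 1>0$. Set $y\in[0,1]^E$ by $y(e)\coloneqq x(e)/g(e)$, so that the theorem amounts to exhibiting an efficiently samplable distribution over $b$-matchings $M$ with $\Pr[e\in M]\ge y(e)$ for every $e$; for a $b$-matching $M$ let $\mathbf{1}_M\in\{0,1\}^E$ denote its indicator vector. The plan is to adapt the LP-duality-plus-ellipsoid ``metarounding'' technique of Carr and Vempala~\cite{DBLP:journals/rsa/CarrV02}, using the greedy algorithm of Section~\ref{sec:greedy} as the required oracle. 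The first (and easiest) step is to restate Theorem~\ref{thm:greedy} in oracle form: for any cost vector $c\in\mathbb{R}^E_{\ge 0}$, running Algorithm~\ref{alg:greedy} with the weights $w(e)\coloneqq c(e)/g(e)$ produces in polynomial time a $b$-matching $M$ with $c^\top\mathbf{1}_M=\sum_{e\in M}g(e)w(e)\ge\sum_{e\in E}w(e)x(e)=c^\top y$.

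Next I would establish the existential core: the point $y$ is dominated by the $b$-matching polytope $P_b\coloneqq\operatorname{conv}\{\mathbf{1}_M:M\subseteq E\text{ is a }b\text{-matching}\}$, i.e.\ there is $z\in P_b$ with $z\ge y$. If not, the closed polyhedron $P_b-\mathbb{R}^E_{\ge 0}$ can be separated from $y$ by some vector $c$; this $c$ must be nonnegative (otherwise $c^\top$ is unbounded above on $P_b-\mathbb{R}^E_{\ge 0}$), and then $c^\top y>\max_{M}c^\top\mathbf{1}_M$, contradicting the oracle guarantee from the first step. Equivalently, the LP $\{\lambda\ge 0:\sum_{M}\lambda_M\mathbf{1}_M\ge y,\ \sum_M\lambda_M=1\}$, with one variable per $b$-matching, is feasible. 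To turn this into an efficient procedure I would run the ellipsoid method on the dual LP $\max\{y^\top c:c\ge 0,\ c^\top\mathbf{1}_M\le 1\ \forall M\}$, using the step-one oracle as an (approximate) separation routine: given a candidate $c\ge 0$, run greedy on $w=c/g$ to get $M$; if $c^\top\mathbf{1}_M>1$ report the violated constraint, and otherwise note that $c^\top y\le c^\top\mathbf{1}_M\le 1$, so $c$ may safely be accepted. Since no accepted point has objective value exceeding $1$, the ellipsoid method halts after polynomially many rounds having generated polynomially many $b$-matchings $M_1,\dots,M_N$; solving the primal LP restricted to the variables $\lambda_{M_1},\dots,\lambda_{M_N}$ (and placing any leftover mass on the empty $b$-matching, whose indicator is $\mathbf{0}$) yields coefficients $\lambda_i\ge 0$ with $\sum_i\lambda_i=1$ and $\sum_i\lambda_i\mathbf{1}_{M_i}\ge y$.

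The sampling procedure then simply outputs $M_i$ with probability $\lambda_i$, which gives $\Pr[e\in M]=\sum_i\lambda_i\mathbf{1}_{M_i}(e)\ge y(e)=x(e)/(|e|-(|e|-1)x(e))$ for every $e$, as desired. The routine parts here are the reweighting in step one and the final probability computation; the one place that needs care is the ellipsoid argument, namely verifying that greedy is a legitimate approximate separation oracle for the dual, that a polynomial number of iterations suffices, and that the primal restricted to the generated columns is feasible, along with the standard bit-complexity bookkeeping. All of this is exactly the Carr--Vempala reduction combined with the ellipsoid method for implicitly described polyhedra (cf.~\cite{DBLP:journals/rsa/CarrV02,schrijver2012geometric}), so I expect it to be the main thing to write out carefully rather than a genuine obstacle.
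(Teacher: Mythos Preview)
Your proposal is correct and follows essentially the same route as the paper: both reduce to the Carr--Vempala metarounding framework, using the greedy algorithm of Theorem~\ref{thm:greedy} (applied with reweighted costs $w(e)=c(e)/g(e)$) as the separation oracle in an ellipsoid argument that certifies the primal covering LP is feasible, and then solve the primal restricted to the polynomially many $b$-matchings encountered. The only cosmetic differences are that the paper packages the duality-plus-ellipsoid step as a standalone black box (Theorem~\ref{thm:general_carr-vempala}) and, to match that black box's hypothesis verbatim, additionally handles arbitrary real weights by truncating negatives to zero---a step you correctly sidestep by observing that only nonnegative $c$ ever arises in the dual.
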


In fact, we show that Theorem~\ref{thm:appendix} is implied by Theorem~\ref{thm:greedy} via a black-box application of a general result, stated as Theorem~\ref{thm:general_carr-vempala} below. Thus, this leads to an alternative algorithm for Theorem~\ref{mainTheorem}, which is slower but still polynomial-time. In the statement below, we denote by $\langle p \rangle$ the encoding length of $p$.
\begin{theorem}\label{thm:general_carr-vempala}
Let $\mathcal S \subseteq 2^U$ be a collection of subsets over a finite ground set $U$, and let $p: U \to [0,1]$. Suppose that there is an algorithm $\mathcal{A}$ that for any weight function $w: U \rightarrow \mathbb{R}$ computes a set $S \in \mathcal S$ such that
\begin{equation*}
\sum_{u \in S} w(u) \geq \sum_{u \in U} p(u) w(u) \enspace.
\end{equation*}
Then, there is a procedure doing $\poly(|U|, \langle p \rangle)$ many calls to $\mathcal{A}$ and performing further operations taking $\poly(|U|, \langle p \rangle)$ time that outputs a collection $\mathcal{S}' \subseteq \mathcal{S}$ of sets, along with coefficients $\{\lambda_S\}_{S \in \mathcal{S}'}$ that satisfy $\lambda_S \geq 0$ for every $S \in \mathcal{S}'$ and $\sum_{S \in \mathcal{S}'} \lambda_S = 1$, such that
\begin{equation}\label{eq:goodDist}
\sum_{S \in \mathcal{S}': u \in S}  \lambda_S \geq p(u) \quad \forall u\in U\enspace.
\end{equation}
\end{theorem}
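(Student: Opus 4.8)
The statement is exactly the kind of ``integrality gap certificate $\Rightarrow$ convex combination'' result established by Carr and Vempala via LP duality and the ellipsoid method, so I would follow that template. Consider the polytope $P \coloneqq \operatorname{conv}\{\chi_S : S \in \mathcal{S}\} \subseteq [0,1]^U$, where $\chi_S$ is the indicator vector of $S$. The goal~\eqref{eq:goodDist} is precisely to show that the point $p$ is dominated (coordinate-wise from below) by some point of $P$, i.e., that $p \in P - \mathbb{R}_{\geq 0}^U$, and to produce such a convex combination explicitly. First I would reformulate the hypothesis on $\mathcal{A}$: for every $w \in \mathbb{R}^U$ it produces $S \in \mathcal{S}$ with $\langle w, \chi_S\rangle \geq \langle w, p\rangle$, which says exactly that $\max_{y \in P}\langle w, y\rangle \geq \langle w, p\rangle$ for all $w$; by LP duality (Farkas / separating hyperplane) this is equivalent to $p$ lying in the down-closure $P^{\downarrow} \coloneqq \{z : z \leq y \text{ for some } y \in P\}$.

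\textbf{Key steps.} The second step is to make this constructive. I would set up the LP ``find $\lambda \geq 0$, $\sum_S \lambda_S = 1$, $\sum_S \lambda_S \chi_S \geq p$'' — this has exponentially many variables, so I would instead run the ellipsoid method on its dual (a polynomial-dimensional LP with exponentially many constraints), using $\mathcal{A}$ as the separation oracle: given a candidate dual point, the oracle either certifies feasibility or, via $\mathcal{A}$ applied to the associated weight vector, returns a violated constraint corresponding to some $S \in \mathcal{S}$. Since the ellipsoid method makes only $\poly(|U|, \langle p\rangle)$ calls, it touches only polynomially many sets $S$; restricting the primal LP to exactly those sets $\mathcal{S}' $ and solving it (now polynomial-size) yields the coefficients $\lambda_S$. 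One technical point I would handle carefully: to run ellipsoid one needs the feasible region to be bounded and to have a guaranteed volume lower bound (or use a sliding-objective / binary-search variant), and one needs to argue the separation oracle runs in time $\poly(|U|, \langle p \rangle, \langle \text{query}\rangle)$, which follows because $\mathcal{A}$ is only ever called with weight vectors of polynomially bounded encoding length (the query points produced by ellipsoid). Standard references (Grötschel--Lovász--Schrijver) guarantee that polynomial-time separation plus a bounded full-dimensional region gives polynomial-time optimization and an explicit optimal basic solution, which is where the bound on $|\mathcal{S}'|$ and the coefficients come from.

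\textbf{Main obstacle.} The conceptual content is routine once phrased this way; the main work is bookkeeping around the ellipsoid method — ensuring the encoding lengths stay polynomial throughout (in particular that the query weight vectors fed to $\mathcal{A}$ are polynomially bounded, and that the resulting basic feasible solution $\lambda$ has polynomial encoding length), and correctly reducing the exponential-size feasibility LP to its separation problem. A mild subtlety worth spelling out is the equivalence between ``$\mathcal{A}$ beats $p$ for \emph{every} $w$, including negative-coefficient $w$'' and ``$p \in P^{\downarrow}$'': the direction we need is that if no dominating convex combination existed, a separating hyperplane would give a weight vector $w$ with $\max_{S}\langle w, \chi_S\rangle < \langle w, p\rangle$, contradicting the hypothesis — here it is essential that the hypothesis is quantified over all real $w$, not just nonnegative ones, since the separating hyperplane for the down-closure can have mixed signs. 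I would state this cleanly as the one non-mechanical lemma and then let the ellipsoid machinery do the rest.
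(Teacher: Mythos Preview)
Your plan is essentially identical to the paper's proof: set up the primal feasibility LP for the convex combination, pass to its dual, run the ellipsoid method using $\mathcal{A}$ as a separation oracle, and then solve the primal restricted to the polynomially many sets the oracle returned. One small correction: your remark that it is ``essential that the hypothesis is quantified over all real $w$'' is off---the separating hyperplane for the down-closure $P-\mathbb{R}_{\geq 0}^U$ must in fact be nonnegative (otherwise the linear functional is unbounded on $P^{\downarrow}$), and indeed the paper's dual has $y\in\mathbb{R}_{\geq 0}^U$, so $\mathcal{A}$ is only ever invoked with nonnegative weights.
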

We highlight that distributions obtained through Theorem~\ref{thm:general_carr-vempala} have support bounded by $\poly(|U|,\langle p \rangle)$. In particular, this follows from the fact that the procedure only does $\poly(|U|,\langle p \rangle)$ many operations and calls to algorithm $\mathcal{A}$.\footnote{Moreover, standard techniques allow for efficiently reducing any polynomial-support distribution satisfying~\eqref{eq:goodDist} to one whose support has size at most  $|U|+1$. This follows from the fact that~\eqref{eq:goodDist} consists of only $|U|$-many linear inequalities.}

Theorem~\ref{thm:general_carr-vempala} builds on the seminal work of Carr and Vempala~\cite{DBLP:journals/rsa/CarrV02} that deals with a uniform bound on the integrality gap of linear programming relaxations of integer programs, whereas the guarantees we are seeking are element-specific, and thus, do not follow from the result of~\cite{DBLP:journals/rsa/CarrV02}. Even though the generalization is quite straightforward, we include it here for completeness and because it seems that this strong connection has been overlooked in prior work.

Before proving Theorem~\ref{thm:general_carr-vempala}, we show how we can use it along with Theorem~\ref{thm:greedy} to prove Theorem~\ref{thm:appendix}.

\begin{proof}[Proof of Theorem~\ref{thm:appendix}]
We will invoke Theorem~\ref{thm:general_carr-vempala} for suitable choices of $U, \mathcal{S}$ and $p$, and with the greedy algorithm as $\mathcal{A}$.

More precisely, let $U=E$, let $\mathcal{S}$ be the set of $b$-matchings of $H$, and $p(e)=\frac{x(e)}{g(e)}$ for every $e \in E$, where $g(e)=|e|-(|e|-1)x(e)$ (note that $p(e) \in [0,1]$, as $x(e) \in [0,1]$ and $g(e) \geq 1$ for every $e \in E$). We claim that with the greedy algorithm (see Algorithm~\ref{alg:greedy}) as procedure $\mathcal{A}$, the assumption of Theorem~\ref{thm:general_carr-vempala} is satisfied.
Then, since the greedy algorithm is efficient, the conclusion of Theorem~\ref{thm:general_carr-vempala} gives us the desired distribution, which leads to an efficient sampling procedure.

It remains to prove that for any $w:E\to \mathbb{R}$, the greedy algorithm can be used to compute a $b$-matching $M\in \mathcal S$ that satisfies
\begin{equation}\label{eqn:assumption}
\sum_{e \in M} w(e) \geq \sum_{e \in E} p(e) w(e) \enspace.
\end{equation}
To this end, consider the adjusted weight function $\overline{w}(e) \coloneqq \max \left\{\frac{w(e)}{g(e)},0 \right\}$ for $e \in E$. Running Algorithm~\ref{alg:greedy} with weight function $\overline{w}$, we obtain a $b$-matching $M'\in \mathcal S$ that, by Theorem~\ref{thm:greedy}, satisfies
\begin{equation*}
\sum_{e\in M'} g(e)\overline{w}(e) \geq \sum_{e\in E} \overline{w}(e)x(e) \enspace.
\end{equation*}
Let $M \coloneqq \{e\in M' : \overline{w}(e) > 0\}$ be the $b$-matching consisting of all edges in $M$ of strictly positive $\overline{w}$-weight. Hence, because $\overline{w}(e)= \max \left\{\frac{w(e)}{g(e)},0 \right\}$, we have
\begin{equation*}
\sum_{e\in M} g(e) \frac{w(e)}{g(e)} \geq \sum_{e\in E} \frac{w(e)}{g(e)}x(e) \enspace.
\end{equation*}
This is equivalent to~\eqref{eqn:assumption}, and thus completes the proof.
\end{proof}

The reasoning above does not depend on the precise form of $g$ (as long as $g(e)\geq 1$ for all $e\in E$). This means that the proof still holds if $g$ is replaced by another function, as long as one can obtain a result analogous to Theorem~\ref{thm:greedy} with $|e|-(|e|-1)x(e)$ replaced by $g(e)$; in particular, it shows that the existence of an efficient algorithm returning a matching satisfying Conjecture~\ref{conj:fks} implies Conjecture~\ref{conj:sampling}.

\medskip

We now prove Theorem~\ref{thm:general_carr-vempala}. The main tools used are LP duality and the ellipsoid method.

\begin{proof}[Proof of Theorem~\ref{thm:general_carr-vempala}] We start by writing an exponential-size linear program that determines whether there exists a collection $\mathcal{S'} \subseteq \mathcal{S}$ of sets, not necessarily polynomially-sized, along with corresponding coefficients $\{\lambda_S\}_{S \in \mathcal{S}'}$ such that $\sum_{S \in \mathcal{S}': u \in S} \lambda_S \geq p(u)$ for every $u \in U$. We call this the primal LP, and it is easy to see that its objective value is $0$, whenever it is feasible, and moreover, it is feasible if and only if such a collection $\mathcal{S}'$ of sets exists.
\begin{equation*}
\everymath={\displaystyle}
\renewcommand{\arraystretch}{1.3}
\begin{array}{cr@{\ }c@{\ }ll}
\min          &0 \quad\quad\quad\\
\textrm{s.t.} & \sum_{S \in \mathcal{S}: u \in S} \lambda_S &\geq & p(u) & \forall u \in U\\
& \sum_{S \in \mathcal{S}} \lambda_S &=& 1\\
& \lambda &\in & \mathbb{R}^{\mathcal S}_{\geq 0} \enspace. &
\end{array}
\end{equation*}
Its dual is shown below.
\begin{equation*}
\everymath={\displaystyle}
\renewcommand{\arraystretch}{1.3}
\begin{array}{cr@{\ }c@{\ }ll}
\max          & \multicolumn{3}{l}{ \sum_{u \in U} p(u)   y(u) - \mu}  \\
\textrm{s.t.} & y(S) &\leq & \mu & \forall S \in \mathcal{S}\\
& \mu &\in& \mathbb{R}\\
& y &\in &\mathbb{R}^U_{\geq 0} \enspace. &
\end{array}
\end{equation*}

The dual LP is feasible because of the all-zeros vector, which leads to an objective value of $0$. Also observe that the dual LP is scale-invariant, in the sense that if $(y,\mu)$ is feasible, then so is $(\gamma y, \gamma \mu)$ for any $\gamma \geq 0$. Thus, if there is any dual LP solution of strictly positive objective value, then the dual LP is unbounded. Consequently, the dual LP is unbounded if and only if there is a solution of value at least $1$, and we describe all those solutions by the following polyhedron.
\begingroup
\everymath={\displaystyle}
\renewcommand{\arraystretch}{1.3}
\begin{equation*}
   \mathcal{Q}(p) \coloneqq  \left\{
   (y, \mu)  \in \mathbb{R}_{\geq 0}^{U} \times \mathbb{R} \, \middle\vert
      \begin{array}{r@{\,}c@{\,}lr}
    \sum_{u \in U}  p(u)    y(u) - \mu &\geq  &1\\
     y(S)  &\leq &\mu  &\quad \forall S \in \mathcal{S}
   \end{array}
    \right \}\,.
\end{equation*}
\endgroup
The above discussion implies the following.
\begin{observation}\label{obs:appendix}
The dual LP has optimal value $0$ if and only if $\mathcal{Q}(p) = \emptyset$. Equivalently, the primal LP is feasible if and only if $\mathcal{Q}(p) = \emptyset$.
\end{observation}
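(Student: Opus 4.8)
The plan is to read off both equivalences from the structural remarks already made about the two LPs, together with standard LP duality. First I would pin down the possible optimal values of the dual LP. It is feasible, since the all-zeros solution is feasible with objective value $0$, and by the scale-invariance noted just above (if $(y,\mu)$ is feasible then so is $(\gamma y, \gamma \mu)$ with objective scaled by $\gamma$), any feasible dual solution of strictly positive objective value can be scaled up to produce arbitrarily large objective. Hence the dual optimal value is either $0$ (the dual is bounded) or $+\infty$ (the dual is unbounded). The dual is unbounded exactly when some feasible solution has strictly positive value, and, again by scaling, this is equivalent to the existence of a feasible solution of value at least $1$ — that is, to $\mathcal{Q}(p) \neq \emptyset$. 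Therefore the dual optimal value equals $0$ if and only if $\mathcal{Q}(p) = \emptyset$, which is the first assertion of the observation.

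For the ``Equivalently'' part I would connect primal feasibility to the dual optimal value via strong LP duality. Note that since $U$ is finite, $\mathcal{S}$ is finite, so both the primal and dual are genuine finite linear programs. If the primal LP is feasible, then its optimal value is $0$ (its objective is the constant $0$), and because the dual is feasible as well, strong duality forces the dual optimal value to also equal $0$. Conversely, if the primal LP is infeasible, then, because the dual is feasible, the standard dichotomy of LP duality — an infeasible primal paired with a feasible dual makes the dual unbounded — gives dual optimal value $+\infty \neq 0$. Chaining the two equivalences yields: the primal LP is feasible $\iff$ the dual optimal value is $0$ $\iff$ $\mathcal{Q}(p) = \emptyset$.

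There is no genuine obstacle here; the statement is essentially a repackaging of the discussion preceding it. The only point that needs a moment of care is the direction ``primal infeasible $\Rightarrow$ dual unbounded,'' which relies on the dual being feasible (true, via the all-zeros solution), so that the ``both infeasible'' case cannot occur. As an alternative that avoids invoking strong duality, one could instead apply a theorem of the alternative (Farkas-type) directly to the primal system $\{\lambda \geq 0,\ \sum_{S} \lambda_S = 1,\ \sum_{S \ni u} \lambda_S \geq p(u)\ \forall u \in U\}$: its infeasibility is certified precisely by a pair $(y,\mu)$ with $y \geq 0$, $y(S) \leq \mu$ for all $S \in \mathcal{S}$, and $\sum_u p(u) y(u) - \mu > 0$, which after rescaling is exactly a point of $\mathcal{Q}(p)$. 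I would mention this as a remark, but present the duality argument above as the shortest route.
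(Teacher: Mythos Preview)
Your argument is correct and matches the paper's own reasoning: the paper does not give a separate proof of the observation but simply states that it follows from the preceding discussion (feasibility of the dual via the all-zeros vector, scale-invariance, and hence the dichotomy ``dual optimal value is $0$ or $+\infty$''), and you have faithfully spelled out those steps together with the LP-duality link to primal feasibility. The Farkas-type alternative you mention is a fine remark but not needed.
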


We will use the ellipsoid method together with algorithm $\mathcal A$ to certify (with polynomially many calls to $\mathcal{A}$) that $\mathcal{Q}(p) = \emptyset$.

\paragraph{An ellipsoid iteration.} Let $(y, \mu) \in \mathbb{R}_{\geq 0}^{U} \times \mathbb{R}$ be a candidate point of $\mathcal{Q}(p)$. If $(y,\mu)$ violates the first constraint in the inequality description of $\mathcal{Q}(p)$, then we can use this inequality as a separating hyperplane. 
So, suppose that $(y,\mu)$ satisfies $\sum_{u \in U} p(u) y(u)- \mu \geq 1$. We now run algorithm $\mathcal A$ with weights $w\in \mathbb{R}^{U}_{\geq 0}$ given by $w(u) = y(u)$ for $u \in U$ to obtain a set $S\in \mathcal S$ that satisfies $ \sum_{u \in S} w(u) \geq \sum_{u \in U} p(u)w(u)$.
Since we have assumed that the first constraint of $\mathcal{Q}(p)$ is satisfied by $(y,\mu)$, we get
\begin{equation*}
    y(S) = w(S) \geq \mu + 1\enspace.
\end{equation*}
Since $S\in \mathcal S$, the above implies that $(y,\mu)$ violates the constraint $y(S) \leq \mu$, which we can thus use as a separating hyperplane. 
In short, for any candidate point $(y, \mu) \in \mathbb{R}_{\geq 0}^{U} \times \mathbb{R}$, we can find a constraint of $\mathcal{Q}(p)$ that is violated by $(y, \mu)$ with a single call to algorithm $\mathcal{A}$.
Thus we have a separation oracle and the ellipsoid method will certify that $\mathcal{Q}(p) = \emptyset$.
Since the encoding length of the separating hyperplanes generated is bounded by $\poly (|U|,\langle p \rangle)$, the number of ellipsoid iterations is polynomially bounded  (see Theorem 6.4.9 of~\cite{schrijver2012geometric}).

\paragraph{Computing the desired distribution.} Combined with Observation~\ref{obs:appendix}, the above discussion shows that the primal LP is feasible, and thus the distribution we are looking for exists. Such a distribution can now be found through standard techniques. More precisely, the ellipsoid method certified emptiness of $\mathcal{Q}(p)$, and did so with the polynomially many constraints that were generated through the separation oracle. Hence, even when replacing in $\mathcal{Q}(p)$ the family $\mathcal{S}$ by the family $\mathcal{S}'\subseteq \mathcal{S}$ of all sets $S \in \mathcal{S}$ that we computed in our calls to the separation oracle to obtain violated constraints of type $y(S) \leq \mu$, the corresponding polyhedron remains empty.
Hence, by strong duality, the primal LP is feasible even when replacing $\mathcal{S}$ by the polynomially-sized subset $\mathcal{S}'$. This reduced primal LP can now be solved efficiently and leads to a distribution with the desired properties.\end{proof}

\end{document}